\newcommand{\share}[1]{[\![#1]\!]}
\DeclareMathOperator{\Gen}{Gen}
\DeclareMathOperator{\Eval}{Eval}
\DeclareMathOperator{\Decode}{Decode}
\DeclareMathOperator{\Real}{Real}
\DeclareMathOperator{\Sim}{Sim}
\DeclareMathOperator{\Leak}{Leak}
\DeclareMathOperator{\Ideal}{Ideal}
\DeclareMathOperator{\Output}{Output}
  \let\proof\relax\let\endproof\relax
\begin{document}
\title{Eliminating Exponential Key Growth in PRG-Based Distributed Point Functions}
\titlerunning{Eliminating Exponential Key Growth in  PRG-based DPF}
%
\author{Marc Damie\inst{1,2}\orcidlink{0000-0002-9484-4460}\thanks{Corresponding author: \email{m.f.d.damie@utwente.nl}}\and
      Florian Hahn\inst{1}\orcidlink{0000-0003-4049-5354}\and
      Andreas Peter\inst{3}\orcidlink{0000-0003-2929-5001}\and
      Jan Ramon\inst{2}}

\authorrunning{M. Damie et al.}

\institute{University of Twente, The Netherlands\and
    Inria, France \and
    Carl von Ossietzky Universität Oldenburg, Germany}
\maketitle              
\begin{abstract}
Distributed Point Functions (DPFs) enable sharing secret point functions across multiple parties, supporting privacy-preserving technologies such as Private Information Retrieval, and anonymous communications.
While 2-party PRG-based schemes with logarithmic key sizes have been known for a decade, extending these solutions to multi-party settings has proven challenging.
In particular, PRG-based multi-party DPFs have historically struggled with practicality due to key sizes growing exponentially with the number of parties and the field size.

Our work addresses this efficiency bottleneck by optimizing the PRG-based multi-party DPF scheme of Boyle et al. (EUROCRYPT'15).
By leveraging the honest-majority assumption, we eliminate the exponential factor present in this scheme.
Our construction is the first PRG-based multi-party DPF scheme with practical key sizes, and provides key up to $3\times$ smaller than the best known multi-party DPF.
This work demonstrates that with careful optimization, PRG-based multi-party DPFs can achieve practical performances, and even obtain top performances.
 \keywords{Distributed Point Function  \and Function Secret Sharing \and Private Information Retrieval \and Multi-Party Computations.}
\end{abstract}

\section{Introduction}
Function Secret Sharing \cite{boyle_function_2015} is a cryptographic primitive enabling to share secret functions.
In these protocols, a key dealer knowing a secret function $f$ distributes $p$ keys to different shareholder.
Each shareholder can use its key to obtain a share of $f(x)$, without any communication between the shareholders.

Among all function families, schemes supporting point functions (i.e., $f(x)=\beta$ if $x=\alpha$, $0$ otherwise) attracted a lot of attention thanks to their numerous applications notably in Private Information Retrieval (PIR) \cite{gilboa_distributed_2014}, in anonymous communications \cite{corrigan-gibbs_riposte_2015}, in digital currencies \cite{zyskind_high-throughput_2024}, and machine learning \cite{boyle_secure_2019}.
These schemes are called ``Distributed Point Functions'' (DPF) \cite{boyle_function_2015,gilboa_distributed_2014}.

To support these applications, there is a significant research incentive aiming to improve existing schemes, notably their key size.
DPF efficiency is commonly evaluated based on the influence of the function domain size ($N$) on the key size.
For two- and three-party DPF, schemes based on PseudoRandom Generators (PRGs) provide logarithmic key sizes \cite{boyle_function_2015,zyskind_high-throughput_2024}.
However, there is still a lot of active research to obtain similar key sizes for any arbitrary number of parties.

In multi-party DPF, three main approaches have emerged.
First, elliptic-curve-based schemes \cite{corrigan-gibbs_riposte_2015,kumar_compact_2024} offer practical $O(\sqrt{N})$ key sizes, but they require a non-linear share decoding.
This non-linearity makes them incompatible with several key applications such as PIR.
Second, Boyle et al. \cite{boyle_function_2015} presented a dishonest-majority scheme with $O(\sqrt{N})$ key size, and Bunn et al. \cite{bunn_cnf-fss_2022} an honest-majority scheme with $O(\sqrt[4]{N})$ key size.
Unfortunately, this asymptotic cost hides an exponential factors $q^p$ (for output shares in $\mathbb{F}_q$ and $p$ parties).
This factor makes these schemes impractical for any modulus $q>210$ (as detailed in Section \ref{sec:comparison}).
This problem was identified in existing works \cite{boyle_function_2022,kumar_compact_2024}, but has not been solved \emph{yet}.

Finally, Bunn et al. \cite{bunn_cnf-fss_2022} proposed a third approach based on honest-majority to build an information-theoretic (IT) scheme with $O(\sqrt{N})$ key size \emph{and no exponential factor}.
This scheme is the only multi-party scheme with practical key sizes supporting all applications.
Even though this scheme is practical, solving the efficiency issues of the other schemes could lead to even better performances.
As PRGs have lead to logarithmic key sizes in two and three-party schemes, optimizing multi-party PRG-based schemes could be promising.

\paragraph{Our Contributions}
Our paper optimizes the PRG-based DPF proposed by \cite{boyle_function_2015}.
Our optimized scheme avoids the exponential factors present in \cite{boyle_function_2015} using the honest-majority assumption; obtaining a key size of $O(\sqrt{N \cdot \binom{p}{m+1}}\log q)$ instead of $O(\sqrt{N}q^{\frac{p-1}{2}}\log q)$.
Our benchmark shows that our scheme is the first multiparty PRG-based scheme with practical key sizes.
We even provide keys up to $3\times$ smaller than the best performing DPF (i.e., the IT DPF by \cite{bunn_cnf-fss_2022}).

\paragraph{Notations}
Let $p$ be the number of parties/shareholders, $m$ be the number of corrupted parties.
Like most FSS works \cite{boyle_function_2015,boyle_function_2016,bunn_cnf-fss_2022,gilboa_distributed_2014,kumar_compact_2024}, we focus on semi-honest adversaries: follow the protocol and infer \emph{passively} secret information.

Let $\mathbb{F}_q$ be a prime field, $N$ be the function domain size, $1^\lambda$ a security parameter, and $\share{x}_i$ be the $i$-th share of the secret $x$.
Let $\nu=\lceil\sqrt{N}\rceil$ and $C=\binom{p}{m+1}$.
Let $G: \{0,1\}^\lambda \rightarrow \mathbb{G}^\nu$ be a PRG, and $\mathbb{G}$ an Abelian group.

\section{Background}
\label{subsec:def-fss}
Function secret sharing (FSS) \cite{boyle_function_2015} enables sharing secret functions between $p$ parties.
Each FSS scheme can share function from a specific function family.

A function family $\mathcal{F}$ \cite{boyle_function_2022} is a pair $(P_\mathcal{F}, E_\mathcal{F})$ where $P_\mathcal{F} \subseteq \{0,1\}^*$ is a collection of function descriptions $\hat{f}$, and $E_\mathcal{F}$: $P_\mathcal{F} \times \{0, 1\}^* \rightarrow \{0, 1\}^*$ is a polynomial-time algorithm defining the function described by $\hat{f}$; i.e., $f(x) = E_\mathcal{F}(\hat{f}, x)$.
All functions within a family share the same domain $\mathcal{X}$ and output space $\mathcal{Y}$.

Due to their applications notably in PIR \cite{gilboa_distributed_2014} and anonymous communications \cite{corrigan-gibbs_riposte_2015}, the most studied function family has been point functions \cite{boyle_function_2015,boyle_function_2016,bunn_cnf-fss_2022,corrigan-gibbs_riposte_2015,gilboa_distributed_2014}; functions $f$ such that $f(x) = \beta$ if $x=\alpha$, and $f(x)=0$ otherwise.
For point functions, the function description is the tuple $(\alpha, \beta)\in P_\mathcal{F}$.
Schemes supporting point functions are called ``Distributed Point Functions'' (DPF).

For a function family $\mathcal{F}$, we define a $p$-party FSS scheme using 3 algorithms:
\begin{itemize}
  \item $\Gen: \mathbb{N} \times P_\mathcal{F} \rightarrow \mathcal{K}^p$ takes as input a security parameter $1^\lambda \in\mathbb{N}$ and a function description $\hat{f}\in P_\mathcal{F}$, and outputs $p$ keys $k_1, \dots, k_p$.
  \item $\Eval: \mathcal{K} \times \mathcal{X} \rightarrow \mathbb{G}$ takes as input $k_i$ and a point $x\in\mathcal{X}$, outputs a share of $f(x)$ that we denote as $\share{f(x)}_i$.
  \item $\Decode: \mathbb{G}^p \rightarrow  \mathcal{Y}$ takes as input the shares $\{\share{f(x)}_1, \dots,\share{f(x)}_p \}$ and outputs the secret $f(x)$.
\end{itemize} 

\begin{definition}[Correctness \cite{boyle_function_2015}] A scheme $(\Gen, \Eval, \Decode)$ is correct if, for any function $f\in\mathcal{F}$ and point $x\in\mathcal{X}$, we have:
  \begin{align*}
    \Pr\left[\Decode(\Eval(k_1, x), \dots, \Eval(k_p, x)) = f(x)\right] = 1\\
    \text{with } k_1, \dots, k_p \leftarrow \Gen(1^\lambda, \hat{f})
  \end{align*}

\end{definition}

\begin{definition}[Privacy \cite{boyle_function_2022}]
  \label{def:privacy}
  Let $\Leak : \{0, 1\}^* \rightarrow \{0, 1\}^*$ be a function specifying the allowable leakage.
  A scheme $(\Gen, \Eval, \Decode)$ is private if, for every set of $m$ corrupted parties $S \subseteq \{1\dots p\}$, there exists a PPT algorithm $\Sim$ (simulator), s.t. for any sequence of function descriptions ($\hat{f}_1, \hat{f}_2, \dots$) of size polynomial in $\lambda$, the outputs of $\Real$ and $\Ideal$ are computationally indistinguishable:

  \begin{itemize}
    \item $\Real(1^\lambda): (k_1, \dots, k_p) \leftarrow \Gen(1^\lambda, \hat{f}_\lambda); \Output~ (k_i)_{i\in S}$
    \item $\Ideal(1^\lambda): \Output~ \Sim(1^\lambda, \Leak(\hat{f}_\lambda))$
  \end{itemize}
\end{definition}

\paragraph{DPF by \cite{boyle_function_2015}}
To build their multi-party DPF scheme under dishonest majority ($m < p$), Boyle et al. \cite{boyle_function_2015} represented the domain $\{1, \dots, N\}$ as a $\nu \times \nu$ grid (with $\nu = \lceil\sqrt{N}\rceil$).
This grid is full of zeros except on the cell $(\gamma_*, \delta_*)$, with $\alpha = \gamma_* \nu + \delta_*$.

For each row $\gamma\in\{1, \dots, \nu\}$, the $\Gen$ algorithm samples $q^{p-1}$ random $\lambda$-bit random seeds $s_{\gamma,1} \dots s_{\gamma, q^{p-1}}$.
For each seed $s_{\gamma,j}$, the algorithm generates additive shares of a coefficient $a_{\gamma,j}$: $\share{a_{\gamma,j}}_1, \dots \share{a_{\gamma,j}}_p$ (i.e., one share per DPF key).
The coefficient is defined as follows $a_{\gamma,j} =1$ if $\gamma = \gamma_*$, 0 otherwise.
Finally, it sets a ``correction word'' $W\in(\mathbb{F}_q)^{\nu}$ such that $W + \sum_{i=1}^{q^{p-1}} G(s_{\gamma_*,j}) = e_{\delta_*}\cdot\beta$ (with $e_{\delta}$ a unit vector equal to 1 on index $\delta$, 0 otherwise).
Each key $k_i$ contains the correction word, their share of the coefficients $\share{a_{\gamma,j}}_i$ (for all rows $\gamma$ and all $j\in\{1, \dots, q^{p-1}\}$), and it contains all the seeds $s_{\gamma, j}$ for which $\share{a_{\gamma,j}}_i\neq0$.
This last condition (on the seed inclusion in a key) ensures that there is at least one seed unknown to an adversary composed of $p-1$ out of $p$ parties. 

The $\Eval$ algorithm represents the input $x$ as a tuple $(\gamma, \delta)$, expands the corresponding seeds $s_{\gamma,j}$, multiplies the expanded seeds with the corresponding shared coefficient $\share{a_{\gamma,j}}_i$, sums everything with the correction word $W$, and the share $\share{f(x)}_i$ is on the $\delta$-th index of the sum vector:
$$
\share{f(x)}_i = v[\delta]\text{ with }v = W + \sum_j \share{a_{\gamma,j}}_i \cdot G(s_{\gamma,j})$$

The $\Decode$ algorithm is a basic additive share decoding: $\sum_i \share{f(x)}_i = f(x)$.

As we optimize \cite{boyle_function_2015}, Algorithm \ref{alg:dpf} (describing our scheme) follows roughly the same structure as their scheme.
The \emph{only difference} is the matrix sampling in Lines \ref{lin:array_sample} and \ref{lin:array_sample_bis}.
We then refer to Algorithm \ref{alg:dpf} for more details.

An element could surprise the reader: the number of random seeds $\nu\times q^{p-1}$.
Such a large number is necessary, so an adversary cannot infer information about the secret function based on the distribution of the shares $\share{a_{\gamma,j}}_i$.
These shares can be structured as matrix shares $A_{\gamma}$ with $A_{\gamma}[i,j] = \share{a_{\gamma,j}}_i$.
To preserve function privacy, each matrix must contain all combination of additive shares summing to 1 (if $\gamma=\gamma_*$) or 0 (if $\gamma\neq\gamma_*$).
If the matrices do not contain all possible combinations, an adversary (owning up to $p-1$ keys out of $p$) can recover $\gamma_*$ based on the share distribution \cite{boyle_function_2015}.
Since there exists $q^{p-1}$ combinations of $p$ shares of 0 (resp. of 1) in $\mathbb{F}_q$, the key generator must sample $q^{p-1}$ random seeds.
\section{Our honest-majority scheme}
\label{app:honest-maj-scheme}
The main scalability bottleneck in \cite{boyle_function_2015} lies in the size of the matrices of shares, we improve their scheme by eliminating its dependence on the field size; thanks to the honest-majority assumption ($m<p/2$).
While Boyle et al. \cite{boyle_function_2015} had assumed a dishonest majority ($m<p$), the honest-majority assumption enables us to redesign the matrices, resulting in more compact keys.
Algorithm \ref{alg:dpf} presents our multi-party DPF scheme, and Figure \ref{fig:dpf-struct} an overview of our DPF keys.

Our matrix sampling (described in the function $\text{MatrixOfShares}$ of Algorithm \ref{alg:dpf}) generates a matrix $A$ that distributes shares among all possible combinations of $m+1$ parties out of $p$.
For each combination $S_j$ of $m+1$ parties, the function samples $m+1$ shares of the secret coefficient $a$ (using additive secret sharing) and assigns each share to the cell of $A$; $A[i,j] = \share{a}_i$ if $i\in S_j$, $0$ otherwise.
The secret coefficient $a$ is either set to $0$ or $1$ depending on whether the value $x$ being evaluated matches $\alpha$ (the non-zero point).

\begin{algorithm}[t]
  \caption{Honest-majority DPF scheme adapted from \cite{boyle_function_2015}.}\label{alg:dpf}
  \begin{algorithmic}[1]
    \Function{MatrixOfShares}{$a$}
    \State{Initialize $A$ an $p\times C$ matrix with zeros and the counter $k$ to $1$.}
    \For{each set of parties $S_j$ in the set of all combinations of $m+1$ of $p$}
    \State{Sample $m+1$ shares of the value $a$: $\{\share{a}_{1,i} \dots \share{a}_{m+1,i}\}$.}
    \For{each $i$ in $S_j$}
    {$A[i,j]\leftarrow\share{a}_{k,i}$ and increment $k$.}
    \EndFor
    \EndFor
    \Return A
    \EndFunction

    \Statex
    \Function{$\Gen$}{$\alpha, \beta, p, m, 1^\lambda$}
    \State{Represent $\alpha$ as a pair $\alpha = (\gamma_*, \delta_*)$ with $\gamma_*, \delta_* \in \{0 \dots \nu\}$.}

    \State{Sample $A_1, \dots, A_\nu$ s.t. for all $\gamma \neq \gamma_*$, $A_{\gamma} \leftarrow \text{MatrixOfShares}(0)$.\label{lin:array_sample}}
    \State{Sample $A_{\gamma_*} \leftarrow \text{MatrixOfShares}(1)$.\label{lin:array_sample_bis}}
    \State{Choose randomly and independently $\nu\cdot C$ seeds $s_{1,1}, \dots, s_{\nu,C}\in\{0,1\}^\lambda$.}
    \State{Set the correction words $W \in\mathbb{G}^{\nu}$ s.t. $W + \sum_{j=1}^C G(s_{\gamma_*,j}) = e_{\delta_*}\cdot\beta$.}
    \For{$i \in \{1 \dots p\}, j \in \{1 \dots C\}, \gamma \in \{1 \dots \nu\}$}
    \If{$A_{\gamma}[i,j] \neq 0$}~{$\sigma_{i,\gamma,j}\leftarrow(s_{\gamma,j}, A_{\gamma}[i,j])$.}
    \Else~{$\sigma_{i,\gamma,j}\leftarrow(0,0)$.}
    \Comment{Receives no seed and no coefficient share}
    \EndIf
    \EndFor
    \State{Set $\sigma_{i,\gamma}\leftarrow (\sigma_{i,\gamma,1}|| \dots || \sigma_{i,\gamma, C})$ for all $1\le i \le p$, $1 \le \gamma \le \nu$.}
    \State\Return{$(k_1, \dots, k_p)$ with $k_i = (\sigma_{i,1} || \dots || \sigma_{i,\nu}|| W)$ for all $1\le i \le p$.}
    \EndFunction
    \Statex
    \Function{$\Eval$}{$k_i, x$}
      \State{Represent $x$ as a pair $x = (\gamma, \delta)$ with $\gamma, \delta \in \{0 \dots \nu\}$.}
      \State{Parse $k_i= ((s_{1,1}, A_1[i,1])||\ldots||(s_{1,C}, A_1[i, C])|| \ldots || (s_{\nu, C}, A_\nu[i, C])||W)$.}
      \State\Return{$y_i[\delta]$ with $y_i\leftarrow A_{\gamma}[i,1]\cdot W + \sum_{j=1}^{C} A_{\gamma}[i,j] \cdot G(s_{\gamma,j})$.}
    \EndFunction
    \Statex
    \Function{$\Decode$}{$\share{y}_1, \dots, \share{y}_p$}
    \Return{$\sum_{i=1}^p \share{y}_i$.}
    \EndFunction
  \end{algorithmic}
\end{algorithm}

\begin{figure}[t]
  \centering
  \includegraphics[width=.8\linewidth]{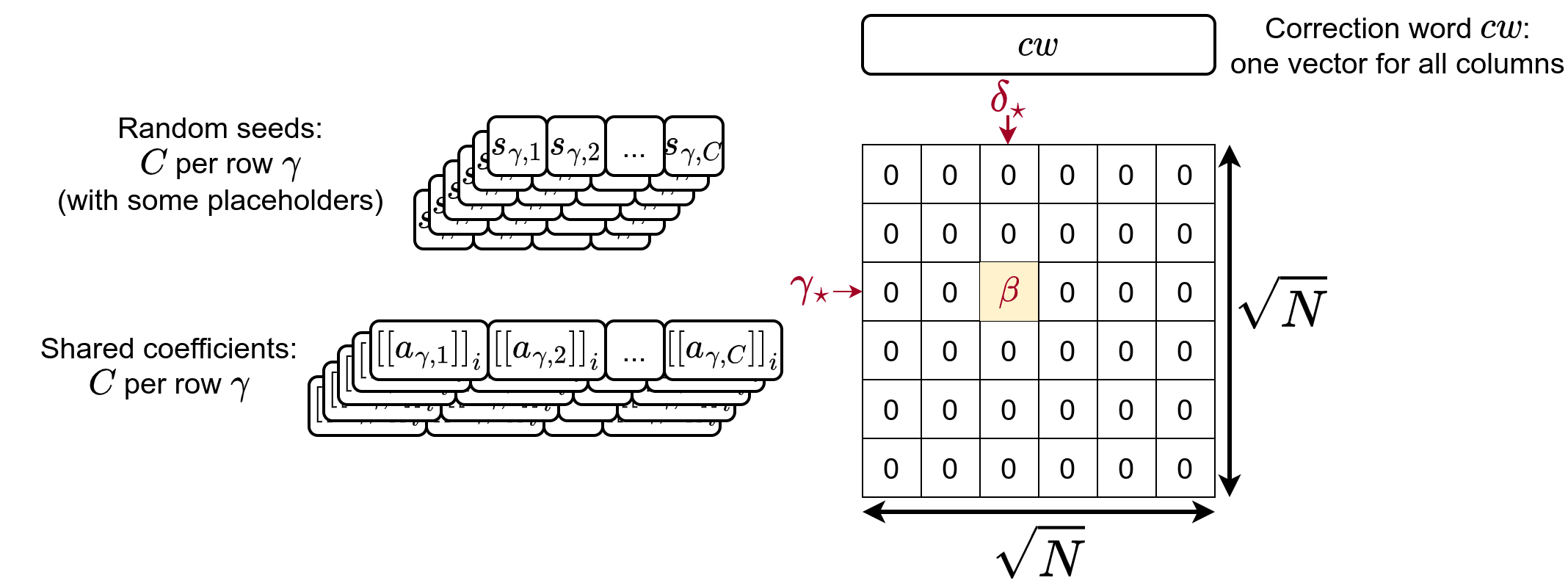}
  \caption{Structure of our DPF keys}
  \label{fig:dpf-struct}
\end{figure}

While in \cite{boyle_function_2015}, each matrix of shares has $p$ rows and $q^{p-1}$ columns, our construction produces matrices with $p$ rows and $\binom{p}{m+1}$ columns.
As in \cite{boyle_function_2015}, we sample one random seed per column.
The $i$-th key contains the $i$-th row of the matrix and includes the $j$-th seed if the corresponding cell $A[i,j]$ is not null.

As in \cite{boyle_function_2015}, it is necessary that (for any given $\gamma$) at least one seed $s_{\gamma,j}$ remains unknown to the adversary.
With all the seeds, they could unmask the correction word $W$ and recover $\beta$.
Our scheme provides each seed to $m+1$ parties, so \emph{under honest majority}, at least one combination of $m+1$ out of $p$ parties contains only honest agents.
Thus, there is at least one seed unknown to an adversary.

Our optimization cannot be extended to dishonest majority.
Indeed, with $m>p/2$, the adversary would know all the seeds because there would be at least one corrupted party in each combination of $m+1$ out of $p$ parties.

Based on these intuitions, we can consider the following security theorem:

\begin{theorem}
    Let $\lambda \in \mathbb{N}$, $N, p \in \mathbb{N}$, then the tuple $(\Gen, \Eval, \Decode)$ as described in Algorithm \ref{alg:dpf} is an FSS scheme for the family of all point functions with $\alpha \in \{1, \dots, N\}$ and any $\beta \in \mathbb{F}_q$.

  Assuming that there exists a secure PRG, then this scheme is correct and private against at most $m$ semi-honest parties with $m < p/2$.
\end{theorem}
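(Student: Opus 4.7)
The plan is to establish correctness by a direct algebraic computation and then argue privacy via a simulator whose indistinguishability from the real distribution follows from a short hybrid argument that rests on PRG security and a combinatorial honest-majority fact.

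\textbf{Correctness.} I would expand $\sum_{i=1}^p \Eval(k_i, x)$. By construction of MatrixOfShares, for every column $j$ the entries $A_\gamma[i,j]$ are additive shares of the secret $a = \mathbf{1}[\gamma=\gamma_*]$ spread over the rows indexed by $S_j$ (and zero outside $S_j$), so $\sum_{i=1}^p A_\gamma[i,j] = a$. Hence
\[
\sum_{i=1}^p y_i = a\cdot W + \sum_{j=1}^C a\cdot G(s_{\gamma,j}) = a\Bigl(W + \sum_{j=1}^C G(s_{\gamma,j})\Bigr).
\]
If $\gamma\neq\gamma_*$ this vanishes; if $\gamma=\gamma_*$ the bracket equals $e_{\delta_*}\beta$ by the definition of the correction word. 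Reading the $\delta$-th coordinate then recovers $f(x)$.

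\textbf{Privacy.} Fix any set $T\subseteq\{1,\dots,p\}$ of $m$ corrupted parties and take leakage $\Leak(\hat f)=(N,\mathbb{F}_q)$. The combinatorial crux is that $m<p/2$ implies $p-m\ge m+1$, so some combination $S_{j^*}$ of $m+1$ parties lies entirely in $\{1,\dots,p\}\setminus T$; consequently, for every row $\gamma$ the adversary holds neither the seed $s_{\gamma,j^*}$ nor any share $A_\gamma[i,j^*]$. I would define a PPT simulator $\Sim$ that, for every $\gamma$ and every column $j$ with $S_j\cap T\neq\emptyset$, draws a fresh uniform seed and fills the rows in $S_j\cap T$ with uniform $\mathbb{F}_q$ entries; puts zeros in rows $T\setminus S_j$; omits all data for columns with $S_j\cap T=\emptyset$; and samples $\tilde W\leftarrow\mathbb{G}^\nu$ uniformly. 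These values are assembled into the corrupted keys exactly as in $\Gen$.

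Indistinguishability follows by two hybrids. In $H_1$, replace every share held by $T$ with uniform $\mathbb{F}_q$-elements: since $|S_j\cap T|\le m<m+1=|S_j|$, additive sharing guarantees the revealed coordinates are jointly uniform and independent of $a$, so $H_1$ is \emph{statistically} identical to $\Real$. In $H_2$, replace $G(s_{\gamma_*,j^*})$ by a uniform element of $\mathbb{G}^\nu$; because $s_{\gamma_*,j^*}$ is never given to any corrupted party, a standard reduction to PRG security yields $H_1\approx_c H_2$. In $H_2$ the correction word $W=e_{\delta_*}\beta-\sum_{j=1}^C G(s_{\gamma_*,j})$ is masked by a fresh uniform summand, hence is itself uniform on $\mathbb{G}^\nu$ and independent of $(\alpha,\beta)$; rewriting $W$ as fresh randomness gives precisely the $\Sim$ distribution, concluding $\Real\approx_c\Ideal$.

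\textbf{Main obstacle.} The delicate part is verifying that nothing in the adversary's view lets it compute $s_{\gamma_*,j^*}$, so that the PRG reduction in $H_1\to H_2$ really goes through; this depends on the fully-honest combination $S_{j^*}$ and on the rule that a seed is delivered only to parties in $S_j$. A smaller subtlety I would flag explicitly is that an additive share can accidentally equal zero with probability $1/q$, suppressing a seed in the real execution while the simulator always provides one; handling this either by conditioning on non-zero sampling or by absorbing a negligible statistical gap per column is standard and leaves the overall reduction intact.
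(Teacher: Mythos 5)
Your proof is correct and follows essentially the same route as the paper's own (sketch) proof: the same $\nu\times\nu$ grid and column-sum correctness computation, the same simulator (random seeds, random $\widetilde{W}$, random shares on the $m+1$ rows of each combination), and the same honest-majority observation that some combination $S_{j^*}$ is entirely honest so the unknown seed $s_{\gamma_*,j^*}$ masks $W$; your version merely makes the hybrid structure explicit. One remark on the seed-suppression subtlety you flag (which the paper silently ignores): your proposed patch of absorbing a $1/q$ statistical gap per column does not work for small constant $q$, and conditioning on non-zero shares changes the real distribution — the clean fix is to have the simulator apply the same rule as $\Gen$, i.e., output $(0,0)$ whenever the sampled share happens to be zero, which makes the simulated and real distributions of that component exactly equal.
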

\begin{proof}
  As we modified only slightly the scheme of \cite{boyle_function_2015} (i.e., redesigned the matrix of shares based on the honest-majority assumption), our proof follows the same structure as theirs.
  We provide a proof sketch, and refer to \cite{boyle_function_2015} for more details.

  The correctness can be verified by an easy arithmetic exercise considering successively three cases: (1) $\gamma \neq \gamma_*$, (2) $\gamma = \gamma_*$ and $\delta \neq \delta_*$, and (3) $\gamma = \gamma_*$ and $\delta = \delta_*$. Using the $\sqrt{N}\times\sqrt{N}$ grid (illustrated in Figure \ref{fig:dpf-struct}), we represent any input $x$ as $(\gamma, \delta)$ and $\alpha$ as $(\gamma_*, \delta_*)$.

  For privacy, we must show that there exists a simulator that outputs samples from a distribution that is computationally indistinguishable from the distribution of the real DPF keys.
  We propose to study separately: the random seeds $s_{\gamma,j}$, the correction word $W$, and the coefficient shares $A_{\gamma}[i,j]$.

  The simulation is straightforward: for each (simulated) seed, sample a random seed $\widetilde{s_{\gamma,j}}$; for the correction word, sample a random vector $\widetilde{W}$; for the coefficients, for each $S_j$ combination of $m+1$ parties of $p$, for each $i$ in $S_j$, sample a random value and store it in $\widetilde{A_{\gamma}}[i,j]$ (the rest of $\widetilde{A_{\gamma}}$ is null).
  The simulator can return ``simulated'' keys based on these elements.

  Since both the real and simulated seeds are randomly sampled, the simulator’s output distribution ($\widetilde{s_{\gamma,j}}$) is computationally indistinguishable from that induced by the distribution of a single output of $\Gen$.

  The correction word $W$ is a secret vector (i.e., $e_{\delta_*} \cdot \beta$) masked with the output of $\binom{p}{m+1}$ seeded PRGs.
  Remember that a key $k_i$ contains a seed $s_{\gamma,j}$ only if the $i$ is part of the $j$-th combination of $m$ out of $p$ parties.
  So, under honest majority, there is always at least one seed unknown to the adversary controlling $m$ out $p$ parties.
  Hence, the correction word $W$ is computationally indistinguishable from the randomly sampled $\widetilde{W}$, because it is masked with the output of (at least) one PRG seeded with a seed unknown to the adversary \cite{corrigan-gibbs_riposte_2015}.

  Finally, each coefficient is shared between $m+1$ parties, so an adversary controlling $m$ parties cannot distinguish real shares from random values $\widetilde{A_{\gamma}}[i,j]$ provided by the simulator.
\end{proof}

\paragraph{Key size optimization:}
Instead of using a $\sqrt{N} \times \sqrt{N}$ grid, we should use a grid with $\sqrt{N}(C)^{-1}$ rows and $\sqrt{N\cdot C}$ columns, with $C=\binom{p}{m+1}$.
This ``non-square'' grid leverages the fact that each row requires $C$ seeds, and a unique vector $W$ for all columns.
Thanks to this trick, we obtain a better key size: $O(\sqrt{N \cdot C}(\lambda + \log q))$.

\paragraph{Extension to comparison functions}
Boyle et al. \cite{boyle_function_2015} presented a simple adaptation of their DPF to support comparison functions; functions such that $f(x) = \beta$ when $x\le \alpha$, $0$ otherwise.
These schemes have applications notably in machine learning \cite{boyle_secure_2019,kumar_compact_2024}.
We can naively reuse our optimization on this other scheme.
\section{Key size comparison}
\label{sec:comparison}
This section compares our optimized scheme to existing schemes in order to identify asymptotic and practical key size reductions.

We focus our comparison on schemes supporting all possible DPF applications; excluding schemes based on elliptic curves that support a limited number of applications due to their non-linear decoding \cite{kumar_compact_2024} (e.g., do not support PIR).

\paragraph{Asymptotic}
Our scheme provides a key size of $O(\sqrt{N \cdot \binom{p}{m+1}}\log q)$, which is clearly better than the key size of \cite{boyle_function_2015} (i.e., $O(\sqrt{N}q^{\frac{p-1}{2}}\log q)$).

As Bunn et al. \cite{bunn_cnf-fss_2022} built an honest-majority scheme with $O(\sqrt[4]{N})$ key size upon \cite{boyle_function_2015}, we can comment how we distinguish from them.
Their paper does not modify \cite{boyle_function_2015}, but combine it with replicated secret sharing to reduce the dependency on $N$ (i.e., $O(\sqrt[4]{N})$ instead of $O(\sqrt{N})$).
However, their approach worsened the exponential factors already in \cite{boyle_function_2015}; as shown in our benchmark below.
On the contrary, we leverage the honest-majority scheme to avoid exponential factors present in \cite{boyle_function_2015}, but we maintained the same dependence on $N$.

The IT DPF by \cite{bunn_cnf-fss_2022} has a key size comparable to ours: $O(\sqrt{N} \cdot\binom{p}{m+1}\log q)$. However, our approach saves a factor $\sqrt{\binom{p}{m+1}}$ compared to them, yielding significant key size reductions in practice.

While these asymptotic comparisons are informative, they are often insufficient to assess practical performance.
Bunn et al. \cite{bunn_cnf-fss_2022} exemplify this problem: although they substantially reduced the dependence on the domain size $N$, they kept exponential factors without providing any concrete efficiency analysis.
Therefore, we present a detailed comparison based on exact key sizes to offer a more accurate assessment.

\paragraph{Exact}
As we aimed to avoid the exponential factor $q^p$ (for outputs in $\mathbb{F}_q$), we start by studying the dependency on $q$.
Figure \ref{fig:comp-modulus} compares key sizes for varying prime moduli.
Our benchmark includes a curve ``Trivial scheme'' corresponding to the most trivial DPF: sharing the function truth table (i.e., $O(N)$ key size).
This curve serves as baseline to identify impractical solutions.
For any $q>5$, the PRG-based solutions \cite{boyle_function_2015,bunn_cnf-fss_2022} have keys orders of magnitude larger than those of this trivial solution, while the IT DPF of \cite{bunn_cnf-fss_2022} and ours are below.

Recently, Boyle \cite{boyle_function_2022} showed that, for a composite modulus $m=q_1q_2\dots q_l$, we can use the Chinese Remainder Theorem (CRT) to replace $q^p$ with $\sum q_i^p$.  
Figure \ref{fig:comp-modulus-prime} compares key sizes for arbitrary moduli, but the variations make the figure poorly readable.
Instead, Figure \ref{fig:comp-modulus-primorial} compares key sizes for primorial moduli; a primorial is the product of the first $n$ primes.
Primorials are the best case of this CRT trick as they provide composite moduli with the smallest primes possible.

Even with the CRT trick, the existing PRG-based schemes \cite{boyle_function_2015,bunn_cnf-fss_2022} provides key sizes larger than the trivial scheme for any modulus above 210.
As their key sizes are impractical, we exclude these schemes from our other figures to focus on the comparison of our scheme to practical schemes.

Figure \ref{fig:comp-modulus-primorial} also shows that the key size of the IT scheme of \cite{bunn_cnf-fss_2022} grows faster with the modulus than ours.
This phenomenon is explained by the fact that our key size is dominated by components conditioned by a security parameter that is independent of the modulus.

\begin{figure}
    \centering
    \begin{subfigure}{0.32\linewidth}
        \includegraphics[width=\linewidth]{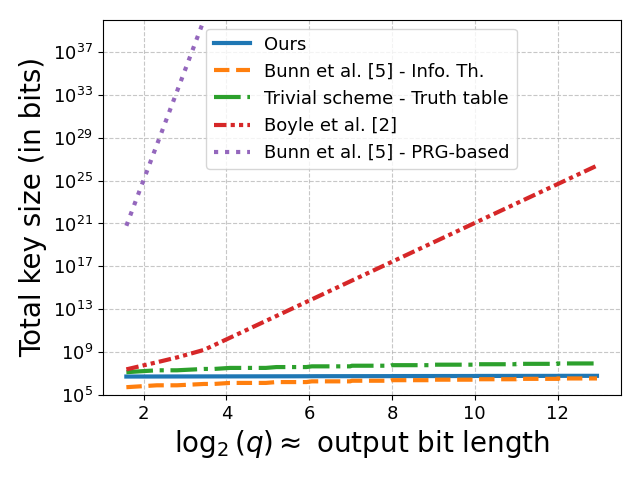}
        \caption{Prime moduli}
        \label{fig:comp-modulus-prime}
    \end{subfigure}
    \hfil
    \begin{subfigure}{0.32\linewidth}
        \includegraphics[width=\linewidth]{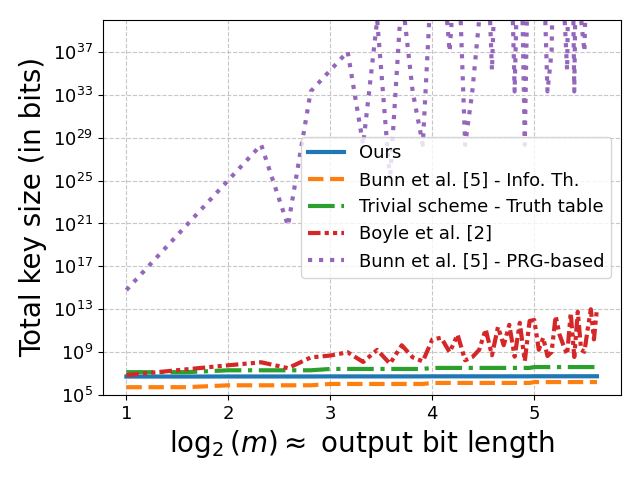}
        \caption{Arbitrary moduli}
        \label{fig:comp-modulus-arbitrary}
    \end{subfigure}
    \hfil
    \begin{subfigure}{0.32\linewidth}
        \includegraphics[width=\linewidth]{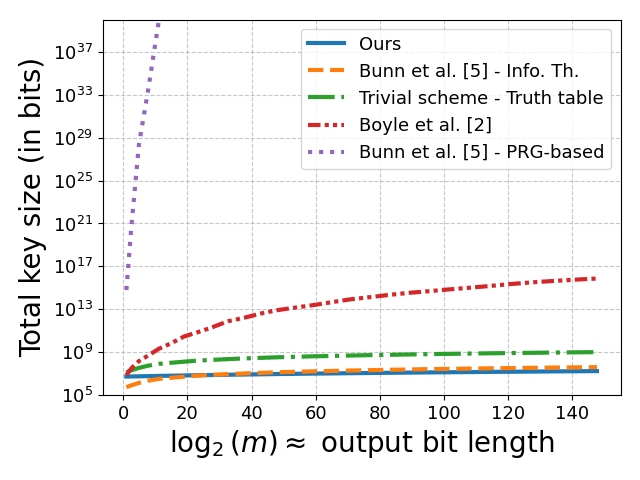}
        \caption{Primorial moduli}
        \label{fig:comp-modulus-primorial}
    \end{subfigure}
    \caption{Key size of various DPF schemes for varying moduli ($p=7$).}
    \label{fig:comp-modulus}
\end{figure}

\begin{figure}
    \centering
    \begin{subfigure}{0.4\linewidth}
        \centering
        \includegraphics[width=.8\linewidth]{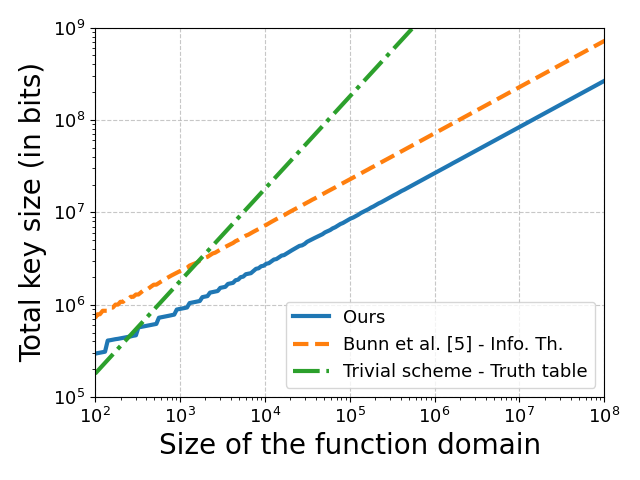}
        \caption{Varying domain size ($p=7$)}
        \label{fig:comp-domain-size}
    \end{subfigure}
    \hfil
    \begin{subfigure}{0.45\linewidth}
        \centering
        \includegraphics[width=.7\linewidth]{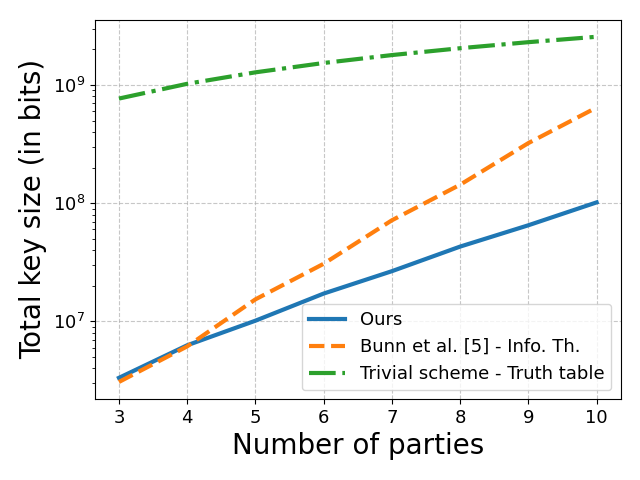}
        \caption{Varying nb. of parties ($N = 10^6$)}
        \label{fig:comp-nb-parties}
    \end{subfigure}
    \caption{Key size of the most efficient DPF schemes.}
\end{figure}

Figure \ref{fig:comp-domain-size} compares the practical schemes for varying  function domain sizes.
\textbf{Our key size is $2.4\times$ smaller than the best existing scheme}.
Moreover, Figure \ref{fig:comp-nb-parties} shows that \textbf{our scheme has a better scaling with the number of parties} thanks to the factor $\sqrt{\binom{p}{m+1}}$ identified in the asymptotic comparison.

\paragraph{Conclusion}
Our optimization based on the honest-majority assumption transformed a PRG-based DPF \cite{boyle_function_2015} with impractical key sizes into the DPF with the smallest key sizes.
Our work proves that, like in two- and three-party schemes, PRG is a promising primitive to build multi-party  DPF with compact keys.


\begin{credits}
    \subsubsection{\ackname}
    This work was supported by the Netherlands Organization for Scientific Research (De Nederlandse Organisatie voor Wetenschappelijk Onderzoek) under NWO:SHARE project [CS.011].
\end{credits}
%
%
%

\bibliographystyle{splncs04}
\bibliography{ref}

\end{document}